 \newtheorem{theorem}{Theorem}
 \newtheorem{proposition}[theorem]{Proposition}
 \newtheorem{lemma}[theorem]{Lemma}
\theoremstyle{definition}
\theoremstyle{remark}
\newcommand{\cv}[1]{\mathbf{#1}}
\newcommand{\mc}[1]{\mathcal{#1}}
\newcommand{\Qin}{\mc Q_{\mathrm{in}}}
\newcommand{\Qout}{\mc Q_{\mathrm{out}}}
\title{Upper Bound Scalability on Achievable Rates of \\Batched Codes for Line Networks}
\author{\IEEEauthorblockN{Shenghao~Yang~and~Jie~Wang}%
\IEEEauthorblockA{The Chinese University of Hong Kong, Shenzhen}}%
\begin{document}

\maketitle

\begin{abstract}
The capacity of line networks with buffer size constraints is an open, but practically important problem. In this paper, the upper bound on the achievable rate of a class of codes, called batched codes, is studied for line networks. Batched codes enable a range of buffer size constraints, and are general enough to include special coding schemes studied in the literature for line networks. Existing works have characterized the achievable rates of batched codes for several classes of parameter sets, but leave the cut-set bound as the best existing general upper bound. In this paper, we provide upper bounds on the achievable rates of batched codes as functions of line network length for these parameter sets. Our upper bounds are tight in order of the network length compared with the existing achievability results.
\end{abstract}

\section{Introduction} 
The communication in a network from a source node to a destination
node may go through multiple hops, each of which
introduces errors. In this paper, we are interested in the problem that
when the intermediate nodes have buffer size constraints, how the communication rate scales with the number of hops.

In particular, we consider a line network of $L$ hops formed by a
sequence of nodes, where discrete memoryless channels (DMCs) exist
only between two adjacent nodes. We call the first node \emph{source node} and the last node \emph{destination node}. Except
for the source and destination nodes, all the other nodes, called
\emph{intermediate nodes}, have one incoming channel and one outgoing
channel. Each intermediate node has a buffer of $B$ bits to keep the
content used between different intermediate processing steps. There
are no other storage and computation constraints on the network nodes.

For some cases of the problem, the answers are known. When
the buffer size $B$ is allowed to increase with the block length at the source
node, the min-cut capacity can be achieved using hop-by-hop
decoding and re-encoding \cite{cover06}. When the zero-error capacity
of each channel is nonzero, using a constant buffer size $B$ can achieve the zero-error capacity for any value of $L$~\cite{Niesen2007}. 

In this paper, we focus on the DMCs in the line network with \emph{finite input and output alphabets} and \emph{$0$ zero-error capacity}. 
Note that for most common channel models, e.g., binary symmetric channels and erasure channels, the zero-error capacities are zero. 
When all cascaded channels are identical, Niesen, Fragouli, and Tuninetti~\cite{Niesen2007} showed that a class of codes with a constant buffer size $B$ can achieve rates
$\Omega(e^{-c L})$, where $c$ is a constant.
They also showed that if the buffer size $B$ is of order $\ln L$, any rate below the capacity of
the channel $Q$ can be achieved.  
Recently, Yang et al. \cite{yang17monograph,yang19isitline} showed that the end-to-end
throughput can be lower bounded by $\Omega(1/\ln L)$ using an intermediate node buffer size $O(\ln\ln L)$.
\footnote{In this paper,
  we say that $f(n)=\Omega(g(n))$ if there exists a real constant
  $c>0$ and there exists an integer constant $n_0\ge1$ such that
  $f(n)\ge c\cdot g(n)$ for every integer $n\ge n_0$; $f(n)=O(g(n))$
  if there exists a real constant $c>0$ and there exists an integer
  constant $n_0\ge1$ such that $f(n)\le c\cdot g(n)$ for every integer
  $n\ge n_0$; and $f(n)=\Theta(g(n))$ if both $f(n)=\Omega(g(n))$ and
  $f(n)=O(g(n))$ are satisfied.}

In contrast to these achievability results, min-cut is still the best upper bound. Characterizing a non-trivial, general upper bound for a line network with buffer size constraints could be difficult as hinted in \cite{vellambi11}. We relax the difficulty of the problem by asking the scalability of the upper bound with the network length $L$ for a class of codes, called \emph{batched  codes}.

Batched codes provide a general coding framework for line networks
with buffer size constraints, and include the codes studied in the
previous works \cite{Niesen2007,yang17monograph,yang19isitline} to show the achievability results as special cases.
A batched code has an outer code and an inner code. 
The outer code encodes the information messages into \emph{batches}, 
each of which is a sequence of coded symbols, 
while the inner code performs a general network coding for the symbols belonging to the same batch.  The inner code, comprising of \emph{recoding} at network nodes on each batch separately, should be designed for specific channels.  
Batched codes have been studied for designing efficient network coding for packet
erasure channels~(see, for example, \cite{chou03,Silva2009}), and practical designs have been provided \cite{yang11ac,yang14bats}.

The upper bound scalability on the achievable rates of batched codes
provides important guidance for us to design batched codes for large networks.
For example, we want to know whether the exponential decade of the achievable rate with $L$ is necessary for $B = O(1)$, and whether we can do better than $\Omega(1/\ln L)$ when $B=O(\ln\ln L)$. These questions are answered in this paper (see Table~\ref{tab:1}).
In particular, we show that when $N=O(1)$, which implies $M, B = O(1)$, the achievable rates must be exponential decade with $L$. When $N=O(1/\ln L)$ and $M=O(1)$, which implies $B = O(1/\ln L)$, the achievable rate is $O(1/\ln L)$. These upper bounds have the same order of $L$ as the previous achievability results, and hence, together, provide tight capacity scalability results of batched codes for these parameter sets. 

Our results are proved in a general setting of line networks where the DMC channels in the line network can be arbitrarily different except for a mild technical condition. The main technique of our converse is to separate the end-to-end transition matrix induced by the inner code as the linear combination of two parts, where one part captures the communication bottleneck in the line network and the other part can be simply upper bounded.

After introducing batched codes, we first use line networks of packet erasure channels to illustrate our main technique (Sec~\ref{sec:erasure}). We then generalize the results to a broad class of channels called \emph{canonical channels}, which include BSCs and BECs (Sec~\ref{sec:canonical}). Finally, we present a technique to solve line networks of general DMCs with zero-error capacity zero (Sec~\ref{sec:genc}).

\begin{table}[tb]
  \centering
    \caption{Summarization of the achievable rate scalability for the channels with the zero-error capacity $0$ using batched codes. Here, $c$ and $c'$ have constant values that do not change with $L$.}
    \label{tab:1}
    \subfloat[][lower bound]{
    \rowcolors{2}{white}{gray!30}
      \begin{tabular}{cccc}
        \toprule
        batch size $M$ & inner block-length $N$ & buffer size $B$ & lower bound \\
        \midrule
        $O(1)$ & $O(1)$ &  $O(1)$ & $\Omega(e^{-cL})$   \\
        $O(1)$ & $O(\ln L)$ & $O(\ln\ln L)$ & $\Omega(1/\ln L)$  \\
        $O(\ln L)$ & $O(\ln L)$ & $O(\ln L)$ & $\Omega(1)$  \\
        \bottomrule
      \end{tabular}
    }\\
    \subfloat[][upper bound]{
      \rowcolors{2}{white}{gray!30}
      \begin{tabular}{ccccc}
        \toprule
        batch size $M$ & inner block-length $N$  & buffer size $B$ & upper bound \\
        \midrule
        arbitrary & $O(1)$ &  $O(1)$ &  {$O(e^{-c'L})$} \\
        $O(1)$ & $O(\ln L)$ & $O(\ln L)$ & {$O(1/\ln L)$} \\
        $O(\ln L)$ & $O(\ln L)$ & $O(\ln L)$  & min-cut\\
        \bottomrule
      \end{tabular}
    }
\end{table}

\section{Line Networks and Batched Codes}
\label{sec:batched}

In this section, we describe a general line network model and introduce batched codes, which form a general coding framework for line networks with buffer size constraints.

\subsection{General Description}
\label{sec:line}

A line network of length $L$ is formed by a sequence of nodes $v_\ell$, $\ell=0,1,\ldots,L$, where communication links exist only between nodes $v_{\ell-1}$ and $v_\ell$ for $\ell=1,\ldots,L$. We assume that each link $(v_{\ell-1},v_\ell)$ is a discrete memoryless channel (DMC) with the transition matrix $Q_\ell$, where the input and output alphabets are $\Qin$ and $\Qout$, respectively, both finite.
We study the communication from the source node $v_0$ to the destination node $v_L$, where all the intermediate nodes $v_1,\ldots, v_{L-1}$ can help with the communication.

Let $K$, $n$ and $M$ be positive integers, and $\mc{A}$ and $\mc B$ be finite alphabets. A batched code has an outer code and an inner code described as follows.
The message of the source node is formed by $K$ symbols from $\mathcal{A}$.
The outer code of a batched code, performed at $v_0$, encodes the message and generates $n$ batches, each of which has $M$ symbols from $\mathcal{A}$.
Here $M$ is called the \emph{batch size}, and $n$ is called the \emph{outer blocklength}.

Let $N$ be a positive integer called the \emph{inner blocklength}.
The inner code of a batched code is performed on different batches separately, and includes the recoding operations at nodes $v_0,\ldots, v_{L-1}$:
\begin{itemize}
\item At the source node $v_0$ that generates the batches,  \emph{recoding} is performed on the original $M$ symbols of a batch to generate $N$ recoded symbols (in $\Qin$) to be transmitted on the outgoing links of the source node. 
\item At an intermediate network node $v$ that does not need to decode the input symbols, \emph{recoding} is performed on the received symbols (in $\Qout$) belonging to the same batch to generate $N$ recoded symbols (in $\Qin$) to be transmitted on the outgoing links of $v$. 
\end{itemize}
In general, the number of recoded symbols transmitted by different nodes can be different. Here we assume that they are all the same for the simplicity of the discussion.

\subsection{Recoding Formulations}

Let us formulate recoding for a generic batch $\cv X$. We denote by $\cv X[k]$ ($1\leq k\leq M$) the $k$th symbol in $\cv X$.
(Similar notations apply to other symbols of a sequence of entries.)
The recoding at the source node is a function $f:\mc A^M \rightarrow \Qin^N$. For $\ell = 1,\ldots,L$, denote by $\cv U_{\ell}$ and $\cv Y_{\ell}$ the input and output of $N$ uses of the link $(v_{\ell-1},v_{\ell})$, where $\cv U_1 = f(\cv X)$. Due to the memoryless of the channel, 
\begin{equation}\label{eq:ch}
  \Pr\{\cv Y_{\ell}=\cv y|\cv U_{\ell}=\cv u\} = Q_\ell^{\otimes N}(\cv y | \cv u) \triangleq \prod_{i=1}^N Q_\ell(\cv y[i]|\cv u[i]).
\end{equation}
The channel inputs $\cv U_{\ell}$, $\ell = 2,3,\ldots L-1$ can be formulated recursively.
Let $N'$ be an integer in $\{0,1,\ldots,N\}$ used to represent the input-output latency. 
For $i=0,1,\ldots,N+N'$, let $\cv B_{\ell}[i]$ be a random variable over the finite set $\mc B$ with $\cv B_{\ell}[0]$ a constant, which is used to represent the content in the buffer for the batch $\cv X$.
The recoding at $v_\ell$ is the function $\phi_\ell$ such that for $i=1,\ldots,N+N'$
\begin{equation}\label{eq:re}
  \left(\cv B_{\ell}[i],\cv U_{\ell+1}[i-N']\right) = \phi_{\ell}\left(\cv B_{\ell}[i-1],\cv Y_{\ell}[i]\right),
\end{equation}
where $\cv U_{\ell+1}[i]$ and $\cv Y_{\ell}[i]$ with
$i\notin\{1,\ldots,N\}$ are regarded as empty random variables.
In other words,
\begin{itemize}
\item For the first $N'$ received symbols, the recoding only updates its buffer content, but does not generate any channel inputs. 
\item After receiving $N'$ symbols, the recoding generates $N$ channel inputs. 
\end{itemize}
An inner code (or recoding) scheme is the specification of
$f$, $N$, $N'$ and $\{\phi_\ell\}$.

At the destination node, all received symbols (which may belong to different batches) are decoded jointly. The end-to-end transformation of a
batch is given by the transition matrix from $\cv X$ to $\cv Y_{L}$, which can be derived using \eqref{eq:ch} and \eqref{eq:re} recursively.
In general, the source recoding function $f$ and the intermediate recoding functions $\{\phi_\ell\}$ can be random. Let $F$ be the transition matrix from $\cv X$ to $\cv U_1$ and let $\Phi_\ell$ be the transition matrix from $\cv Y_\ell$ to $\cv U_{\ell+1}$.
We have the Markov chain
\begin{equation*}
  \cv X \rightarrow \cv U_1 \rightarrow \cv Y_1 \rightarrow \cdots \rightarrow \cv U_L \rightarrow \cv Y_L.
\end{equation*}
The end-to-end transition matrix from $\cv X$ to $\cv Y_{L}$ is
\begin{equation}\label{eq:tran}
  W_L \triangleq FQ_1^{\otimes N}\Phi_1Q_2^{\otimes N}\Phi_2\cdots Q_{L-1}^{\otimes N}\Phi_{L-1}Q_{L}^{\otimes N}.
\end{equation}

\subsection{Design Considerations}

The major parameters of a batched code include:
i) batch size $M$,
ii) inner blocklength $N$, and
iii) buffer size $B$ at the intermediate nodes. 
The buffer size $B = \log |\mc B|$ when $N'=0$, and $B = 2\log|\mc B|$ when $N'>0$.
For a given recoding scheme, the maximum achievable rate of the outer code is $\max_{p_{\cv X}} I(\cv X;\cv Y^{(L)})$ for $N$ channel uses. In other words, the design goal of a recoding scheme is to maximize
\begin{equation}\label{eq:recoding}
  C_L \triangleq \frac{1}{N}\max_{p_{\cv X}} I(\cv X;\cv Y_{L}) = \frac{1}{N}\max_{p_{\cv X}} I(p_{\cv X}, W_L)
\end{equation}
under certain constraints of $M$, $N$ and $B$ to be discussed later.
For a given recoding scheme, an outer code should be designed for the transition matrix $W_L$.
The optimal value of~\eqref{eq:recoding} is called the capacity of the line network with batched codes (under a certain constraint of $M$, $N$ and $B$), denoted as $C_L$.

By the convexity of mutual information for $W_L$ when $p_{\cv X}$ is fixed, we have the following proposition.

 \begin{proposition}\label{pro:deterministic}
   There exists a deterministic capacity achieving recoding scheme.
 \end{proposition}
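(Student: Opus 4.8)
The plan is to fix a capacity-achieving input distribution $p_{\cv X}$ and then exploit the stated fact that, with $p_{\cv X}$ held fixed, the mutual information $I(p_{\cv X},W_L)$ is a convex function of the end-to-end matrix $W_L$ \cite{cover06}. Concretely, I would represent the matrix $W_L$ produced by an arbitrary (possibly randomized) recoding scheme as a convex combination of the matrices produced by deterministic recoding schemes, and then invoke convexity to extract one deterministic scheme that is at least as good.

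First I would fix $N$ and $N'$ and take a recoding scheme together with an input distribution $p_{\cv X}^\star$ attaining $C_L=\tfrac1N I(p_{\cv X}^\star,W_L)$ in \eqref{eq:recoding}; such an optimizer exists since, for fixed $N$ and $N'$, the scheme parameters range over a compact set and $I$ is continuous. I would then collect the internal randomness of the source map $f$ and of each node map $\phi_\ell$ into independent seeds $R_f,R_1,\ldots,R_{L-1}$ and write $R=(R_f,R_1,\ldots,R_{L-1})$. Conditioned on $R=r$, every map in \eqref{eq:re} becomes deterministic, so $f$ and $\{\phi_\ell\}$ specialize to a genuine deterministic recoding scheme (with the same $N$ and $N'$), whose factors $F^{(r)},\Phi_1^{(r)},\ldots,\Phi_{L-1}^{(r)}$ are $0/1$ transition matrices; write $W_L^{(r)}$ for its end-to-end matrix from \eqref{eq:tran}. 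Because the product in \eqref{eq:tran} is multilinear in these factors, the channel matrices $Q_\ell^{\otimes N}$ carry no dependence on $R$, and the seeds are independent across nodes, taking the expectation over $R$ gives
\begin{equation*}
  W_L = \E_R\!\left[W_L^{(R)}\right] = \sum_r \Pr\{R=r\}\, W_L^{(r)},
\end{equation*}
which exhibits $W_L$ as a convex combination of the deterministic transition matrices $W_L^{(r)}$.

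With this decomposition in hand, convexity of $I(p_{\cv X}^\star,\cdot)$ and Jensen's inequality yield
\begin{equation*}
  I\!\left(p_{\cv X}^\star, W_L\right) \le \E_R\!\left[I\!\left(p_{\cv X}^\star, W_L^{(R)}\right)\right] \le \max_r I\!\left(p_{\cv X}^\star, W_L^{(r)}\right).
\end{equation*}
Choosing a maximizing seed $r^\star$, the deterministic scheme it indexes satisfies $I(p_{\cv X}^\star,W_L^{(r^\star)})\ge I(p_{\cv X}^\star,W_L)=N C_L$, so by \eqref{eq:recoding} its rate is at least $C_L$. Since deterministic schemes are a special case of the schemes over which $C_L$ is defined, this rate cannot exceed $C_L$ either; hence the deterministic scheme indexed by $r^\star$ achieves $C_L$.

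I expect the only delicate point to be the mixture representation of $W_L$: one must bundle the randomness so that conditioning on $R$ produces an \emph{honest} deterministic scheme, and must verify that the expectation commutes with the product in \eqref{eq:tran}. Carrying out the derandomization at the level of the end-to-end matrix $W_L^{(r)}$, rather than at the level of the individual maps $\phi_\ell$, is what keeps this clean: it sidesteps the nonlinear dependence of each $\Phi_\ell$ on the per-symbol recoding function and reduces the whole argument to the single convexity inequality above.
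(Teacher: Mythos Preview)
Your proposal is correct and follows exactly the approach the paper indicates: the paper's entire argument is the single sentence preceding the proposition, namely that $I(p_{\cv X},W_L)$ is convex in $W_L$ for fixed $p_{\cv X}$, and you have simply spelled out the standard derandomization that this convexity implies. Your mixture identity $W_L=\E_R[W_L^{(R)}]$ holds already by conditioning on $R$ (independence of the seeds across nodes is what makes the product formula \eqref{eq:tran} valid in the first place, so you are not assuming anything beyond the paper's setup), and the remaining Jensen/extraction step is routine.
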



\subsection{Capacity Scalability}

Under various constraints of $M$, $N$ and $B$, we study how the capacity of a line network with batched codes scales with the network length $L$.
Denote by
$C(Q_{\ell})$ and $C_0(Q_{\ell})$ the channel capacity and the
zero-error capacity of $Q_{\ell}$, respectively. Note that if
$C_0(Q_{\ell}) >0$ for any $\ell$, a constant rate can
be achieved for any network length $L$ using fixed $M$, $N$ and
$B$ (see also \cite{Niesen2007}).  
The same scalability result can be extended to
a line network with only a fixed number of DMCs $Q_{\ell}$ with
$C_0(Q_{\ell}) =0$. Henceforth in this paper, we consider the case that
$C_0(Q_{\ell}) =0$ for all $\ell$.


\subsubsection{$M=\Theta(N)$, $B=\Theta(N)$ and $N\rightarrow \infty$}

\emph{Decode-and-forward} is an optimal recoding scheme and achieves the
min-cut capacity $\min_{\ell=1}^L C(Q_{\ell})$ when i) $B$ is not
limited and ii) $N$ is allowed to be arbitrarily
large~\cite{cover06}.



\subsubsection{$M=\Theta(N)$, $B=\Theta(N)$ and $N=O(\ln L)$}

 As $N$ does not tend to infinity, the error probability at each intermediate node does not tend to zero if decode and forward is applied. 
When $Q_{\ell}$, $\ell=1,\ldots,L$ are identical, any constant rate below the channel capacity $C(Q_{\ell})$ 
can be achieved using batched codes with $n\rightarrow \infty$ \cite{Niesen2007}.

\subsubsection{$N=O(1)$}

When $N$ is a fixed number that does not change with
$L$, it is sufficient to consider a fixed $B$ and $M$. 
When $Q_{\ell}$, $\ell=1,\ldots,L$ are identical, $C_L$ tends to zero as $L\rightarrow\infty$ \cite{Niesen2007}. It was also shown that when $\Phi_{\ell}$, $\ell=1,\ldots,L-1$ are also identical, the maximum achievable rate converges to zero \emph{exponentially} fast.

When $N=O(1)$, the scalability of $C_L$ for general cases is still
open. For example, it is unknown whether this \emph{exponential
convergence} of the achievable rate still holds when channels and recoding functions at
intermediate nodes can be different.
In this paper, we will answer this question by a general upper bound that decreases exponentially in $L$.


\subsubsection{$M=O(1)$}

We are also interested in the case that $M$ is a relatively small,
fixed number that does not change with the network length $L$, so that
the major parameters of the outer code do not depend on the network
size. This may have certain advantages for the hardware implementation
of the outer code. 
It was shown in \cite{yang19isitline} that when $N=O(\ln L)$ and
$B = O(\ln\ln L)$, rate $\Omega(1/\ln L)$ can be achieved. Note that
$B = O(\ln N)$ is necessary when  a node needs at least to count how many
packets of a batch has been received.  In this paper, we will show that when
$N=O(\ln L)$ and $B = O(N)$, $C_L$ is $O(1/\ln L)$.




\section{Line Networks of Packet Erasure Channels}
\label{sec:erasure}

We first discuss a special case that the channels $\{Q_\ell\}$ are identical \emph{packet erasure channels} with transition matrix $Q_{\text{erasure}}$. Fix an alphabet $\mc Q^*$ with $|\mc Q^*|\geq 2$.  Suppose that the input alphabet $\Qin$  and the output alphabet $\Qout$ are both $\mc Q^* \cup \{0\}$ where $0\notin \mc Q^*$ is called the erasure. For each $x \in \mc Q^*$,
\begin{equation*}
  Q_{\text{erasure}}(y|x) =
  \begin{cases}
    1 - \epsilon & \text{if } y = x, \\
    \epsilon & \text{if } y = 0,
  \end{cases}
\end{equation*}
where $\epsilon$ is a constant value in $(0,1)$ called the erasure probability.
The input $0$ can be used to model the input when the channel is not used for transmission and we define $Q(0|0) = 1$. 

The relation between the input $X$ and output $Y$ of a packet erasure channel can be written as a function
$Y = XZ$, where $Z$ is a binary random variable independent of $X$ with $\Pr\{Z = 0\} = 1 - \Pr\{Z = 1\} = \epsilon$. In other words, $Z$ indicates whether the channel output is the erasure or not. With this formulation, we can write for $\ell=1,\ldots,L$ and $i=1,\ldots, N$,
\begin{equation*}
  \cv Y_{\ell}[i] = \cv U_{\ell}[i] \cv Z_{\ell}[i]
\end{equation*}
where $\cv Z_{\ell}[i]$ are independent binary random variables with $\Pr\{\cv Z_{\ell}[i] = 0\} = \epsilon$. 

The main idea of our converse is that the worst link in a line network restricts the capacity. We define the event $E_0$ to capture the communication bottleneck
\begin{equation*}
  E_0 = \cup_{\ell=1}^L \{\cv Z_{\ell} = 0\} = \left\{\lor_{\ell=1}^L(\cv Z_{\ell} =  0)\right\},
\end{equation*}
where $\cv Z_{\ell}=0$ means $\cv Z_{\ell}[i]=0$ for all $i$.
In other words, $E_0$ is the event that for at least one link, all the $N$ uses of the channel for a batch are erasures. Define $W_L^{(0)}$ and $W_L^{(1)}$ as the transition matrix from $\mc A^M$ to $\Qout^N$ such that
\begin{IEEEeqnarray*}{rCl}
  W_L^{(0)}(\cv y|\cv x) & = & \Pr\{\cv Y_L=\cv y|\cv X=\cv x, E_0\}, \\
  W_L^{(1)}(\cv y|\cv x) & = & \Pr\{\cv Y_L=\cv y|\cv X=\cv x, \overline{E_0}\},
\end{IEEEeqnarray*}
where $\overline{E_0} = \left\{\land_{\ell=1}^L(\cv Z_{\ell} \neq 0)\right\}$. As $\cv X$ and $\cv Z_{\ell}$, $\ell=1,\ldots, L$ are independent, we have 
\begin{equation*}
  W_L = W_L^{(0)}p_0 + W_L^{(1)}p_1,
\end{equation*}
where $p_0 = \Pr\{E_0\}$ and $p_1 = 1-p_0$. As $I(p_{\cv X}, W_L)$ is
a convex function of $W_L$ when $p_{\cv X}$ is fixed, we obtain
\begin{equation*}
  I(p_{\cv X}, W_L) \leq p_0I(p_{\cv X}, W_L^{(0)})  + p_1I(p_{\cv X}, W_L^{(1)}).
\end{equation*}

\begin{lemma}\label{lemma:e}
  For a line network of identical packet erasure channels, $I(p_{\cv X}, W_L^{(0)}) = 0$.
\end{lemma}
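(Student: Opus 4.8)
The plan is to show that, conditioned on $E_0$, the end-to-end output $\cv Y_L$ is independent of the input $\cv X$. Since $W_L^{(0)}$ is by definition the conditional transition matrix given $E_0$, such independence means that every row of $W_L^{(0)}$ is identical, and a transition matrix with identical rows carries zero mutual information for every input distribution $p_{\cv X}$, yielding $I(p_{\cv X}, W_L^{(0)}) = 0$.

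The key observation is that a single all-erasure link destroys all information about $\cv X$. Concretely, if $\cv Z_\ell = 0$ then $\cv Y_\ell[i] = \cv U_\ell[i]\,\cv Z_\ell[i] = 0$ for every $i$, so $\cv Y_\ell$ equals the constant all-erasure vector irrespective of $\cv U_\ell$, and hence irrespective of $\cv X$. Along the Markov chain $\cv X \to \cv U_1 \to \cv Y_1 \to \cdots \to \cv Y_L$, once $\cv Y_\ell$ is pinned to this constant, every downstream variable $\cv U_{\ell+1}, \cv Y_{\ell+1}, \ldots, \cv Y_L$ is produced from $\cv Y_\ell$ together with the channel and recoding randomness alone, none of which depends on $\cv X$.

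I expect the main obstacle to be that $E_0 = \cup_{\ell} \{\cv Z_\ell = 0\}$ is a union of overlapping events, so conditioning on it does not factor directly. I would resolve this by partitioning $E_0$ according to the first all-erasure link: let $\tau$ be the smallest index $\ell$ with $\cv Z_\ell = 0$, so that $E_0 = \bigsqcup_{\ell=1}^{L}\{\tau = \ell\}$ where $\{\tau = \ell\} = \{\cv Z_1 \neq 0, \ldots, \cv Z_{\ell-1} \neq 0,\ \cv Z_\ell = 0\}$. On each disjoint piece the observation above applies, so $\Pr\{\cv Y_L = \cv y \mid \cv X = \cv x,\ \tau = \ell\}$ does not depend on $\cv x$. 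Moreover, $\tau$ is a function of $(\cv Z_1, \ldots, \cv Z_L)$ and is therefore independent of $\cv X$, so the mixing weights satisfy $\Pr\{\tau = \ell \mid \cv X = \cv x,\ E_0\} = \Pr\{\tau = \ell \mid E_0\}$, again free of $\cv x$.

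Combining the two factors via the law of total probability over the partition, $W_L^{(0)}(\cv y \mid \cv x) = \sum_{\ell=1}^{L} \Pr\{\cv Y_L = \cv y \mid \cv X = \cv x,\ \tau = \ell\}\,\Pr\{\tau = \ell \mid E_0\}$ is independent of $\cv x$, which is exactly the row-identity property needed to conclude $I(p_{\cv X}, W_L^{(0)}) = 0$.
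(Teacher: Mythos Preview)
Your proposal is correct and follows essentially the same approach as the paper: both partition $E_0$ into disjoint events indexed by a distinguished all-erasure link, then show on each piece that $\cv Y_\ell$ collapses to the constant all-erasure vector so that $\cv Y_L$ is conditionally independent of $\cv X$. The only cosmetic difference is that you partition by the \emph{first} all-erasure link $\tau$, whereas the paper partitions by the \emph{last} one via the events $\{\cv Z_\ell=0\}\cap\{\cv Z_{\ell'}\neq 0,\ \ell'>\ell\}$; either direction works, and the paper then carries out the independence step by an explicit factorization of $P(\cv y_L,\cv x,E_0)$ rather than invoking the Markov chain verbally.
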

\begin{proof}
  Denote by $P$ the (joint) probability mass function of the random variables we have defined for the batch codes. To prove the lemma, we only need to show for all $\cv x \in \mc A^M$ and $\cv y_L \in \Qout^N$,
  \begin{equation}\label{eq:csd}
    P(\cv y_L, \cv x, E_0) = P(\cv x) P(\cv y_L, E_0),
  \end{equation}
  which implies $I(p_{\cv X}, W_L^{(0)}) = 0$.

  Define a sequence of events for $\ell = 1,\ldots, L$,
  \begin{equation*}
    \overline{E_0^{(\ell)}} = \{\cv Z_{\ell'} \neq 0, \ell'>\ell\}.
  \end{equation*}
  We have $\{\cv Z_{\ell}=0\} \cap \overline{E_{0}^{(\ell)}}$, $\ell=1,\ldots,L$ are disjoint and $E_0 =  \cup_{\ell=1}^L \left[\{\cv Z_{\ell}=0\} \cap  \overline{E_0^{(\ell)}}\right]$.
  Therefore, 
  \begin{IEEEeqnarray*}{rCl}
    \IEEEeqnarraymulticol{3}{l}{P(\cv y_L, \cv x, E_0)} \\
    & = & \sum_{\ell} P\left(\cv y_L, \cv x, \cv Z_{\ell}=0, \overline{E_0^{(\ell)}}\right) \\
    & = & \sum_{\ell} \sum_{\cv y_{\ell}} \sum_{\cv u_{\ell}} P\left(\cv y_L, \cv x, \cv y_{\ell},\cv u_{\ell}, \cv Z_{\ell}=0, \overline{E_0^{(\ell)}}\right) \\
    & = & \sum_{\ell} \sum_{\cv y_{\ell}} P(\cv y_L, \overline{E_0^{(\ell)}}\Big| \cv y_{\ell}) \sum_{\cv u_{\ell}} P(\cv x, \cv u_\ell, \cv y_{\ell},\cv Z_{\ell}=0).
  \end{IEEEeqnarray*}
  We have 
  \begin{IEEEeqnarray*}{rCl}
    \IEEEeqnarraymulticol{3}{l}{\sum_{\cv u_{\ell}} P(\cv x, \cv u_\ell, \cv y_{\ell},\cv Z_{\ell}=0)} \\
    & = & \sum_{\cv u_{\ell}} P(\cv x, \cv u_\ell) P(\cv Z_{\ell}=0) P(\cv y_{\ell}|\cv u_{\ell},\cv Z_{\ell}=0) \\
    & = & \sum_{\cv u_{\ell}} P(\cv x, \cv u_\ell) P(\cv Z_{\ell}=0) P(\cv y_{\ell}|\cv Z_{\ell}=0) \\
    & = & P(\cv x) P(\cv y_{\ell},\cv Z_{\ell}=0) 
  \end{IEEEeqnarray*}
  where $P(\cv y_{\ell}|\cv u_{\ell},\cv Z_{\ell}=0) = P(\cv y_{\ell}|\cv Z_{\ell}=0)$ follows that $\cv Y_\ell = 0$ as $\cv Z_{\ell} = 0$. 
  Hence, we obtain
  \begin{IEEEeqnarray*}{rCl}
    P(\cv y_L, \cv x, E_0) & = & P(\cv x)  \sum_{\ell} \sum_{\cv y_{\ell}} P\left(\cv y_L, \overline{E_0^{(\ell)}}\Big | \cv y_{\ell}\right) P(\cv y_{\ell},\cv Z_{\ell}=0).
  \end{IEEEeqnarray*}

  Similarly, we have
  \begin{IEEEeqnarray*}{rCl}
    P(\cv y_L, E_0) & = & \sum_{\ell} \sum_{\cv y_{\ell}} P\left(\cv y_L, \overline{E_0^{(\ell)}}\Big| \cv y_{\ell}\right) P(\cv y_{\ell},\cv Z_{\ell}=0).
  \end{IEEEeqnarray*}
  Therefore, we show \eqref{eq:csd}.
\end{proof}

As $p_1 = (1-\epsilon^N)^L$ and
\begin{equation*}
  I(p_{\cv X}, W_L^{(1)})  \leq  \min\{ M \ln|\mc A|, N\ln |\Qout| \},
\end{equation*}
we have
\begin{equation}\label{eq:pec}
  C_L  \leq  \frac{(1-\epsilon^N)^L}{N} \min\{ M \ln|\mc A|, N\ln |\Qout| \}.
\end{equation}

\begin{theorem}
  For a line network of length $L$ of packet erasure channels with erasure probability $\epsilon$,
  \begin{enumerate}
\item When $N=O(1)$, $C_L = O((1-\epsilon^N)^L)$.
\item When $M=O(1)$ and $N=\Theta(\ln L)$, $C_L = O(1/\ln L)$. 
\item When $M=\Omega(\ln L)$ and $N=\Omega(\ln L)$, $C_L = O(1)$.
\end{enumerate}
\end{theorem}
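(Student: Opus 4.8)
The plan is to reduce all three parts to the single master inequality \eqref{eq:pec},
\[
  C_L \;\leq\; \frac{(1-\epsilon^N)^L}{N}\,\min\{M\ln|\mc A|,\; N\ln|\Qout|\},
\]
which already encodes the entire converse. Indeed, Lemma~\ref{lemma:e} annihilates the information carried by the bottleneck part $W_L^{(0)}$, so that the convexity bound leaves only the term $p_1 I(p_{\cv X},W_L^{(1)})$ with $p_1=(1-\epsilon^N)^L$ and the elementary cardinality bound $I(p_{\cv X},W_L^{(1)})\le\min\{M\ln|\mc A|,N\ln|\Qout|\}$. With this in hand, each part of the theorem becomes a matter of selecting the appropriate branch of the minimum and tracking the asymptotics in the corresponding regime of Table~\ref{tab:1}.

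For part (1), with $N$ a fixed constant I would keep the second branch $N\ln|\Qout|$; the factor $N$ cancels and yields $C_L\le(1-\epsilon^N)^L\ln|\Qout|$. Since $\Qout$ is a fixed finite alphabet, $\ln|\Qout|$ is an absolute constant and hence $C_L=O((1-\epsilon^N)^L)$; here $N=O(1)$ keeps $\epsilon^N$ bounded away from $0$, so $1-\epsilon^N$ is bounded away from $1$ and the bound genuinely decays exponentially in $L$. For part (2), with $M=O(1)$ I would instead keep the first branch $M\ln|\mc A|$ and bound $(1-\epsilon^N)^L\le1$, giving $C_L\le M\ln|\mc A|/N$. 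Here $M\ln|\mc A|=O(1)$, while $N=\Theta(\ln L)$ forces $1/N=O(1/\ln L)$, so $C_L=O(1/\ln L)$.

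For part (3) I would again take the second branch together with $(1-\epsilon^N)^L\le1$ to obtain $C_L\le\ln|\Qout|=O(1)$; this simply recovers the trivial per-channel-use bound $C_L\le\ln|\Qout|$, and the hypotheses $M,N=\Omega(\ln L)$ serve only to align with the matching $\Omega(1)$ achievability (min-cut) in Table~\ref{tab:1} rather than to drive the upper bound. Consequently I expect essentially no obstacle in the theorem itself: the substantive work is entirely front-loaded into \eqref{eq:pec}, and the genuine difficulty lies in Lemma~\ref{lemma:e}, namely the conditioning/Markov-chain argument showing that conditioning on the bottleneck event $E_0$ renders $\cv X$ and $\cv Y_L$ independent. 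The only care needed at the theorem level is bookkeeping of which branch of the minimum to use and keeping $\ln|\mc A|$ and $\ln|\Qout|$ as fixed constants absorbed into the $O(\cdot)$.
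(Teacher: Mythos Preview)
Your proposal is correct and matches the paper's proof exactly: the paper simply says the theorem follows by substituting the assumptions on $M$ and $N$ into \eqref{eq:pec}, and you have carried out that substitution in detail, choosing the appropriate branch of the minimum in each case. There is nothing to add.
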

\begin{proof}
  The theorem can be proved by substituting $M$ and $N$ in each case into \eqref{eq:pec}.
\end{proof}

\section{Converse for General Channels}

Consider a generic channel $Q:\Qin \to \Qout$. The relation between the input
$X$ and output $Y$ of $Q$ can be modeled as a function $\alpha$ (see \cite[Section~7.1]{yeung08}):
\begin{equation}\label{eq:ssd} 
  Y = \alpha(X,Z=(Z_x, x\in \Qin)) =  \sum_{x\in\Qin}\bm 1\{X=x\}Z_x,
\end{equation}
where $\bm 1$ denotes the indicator function, and $Z_x, x\in \Qin$ are independent random variables with alphabet $\Qout$ define as
\begin{equation}\label{eq:z}
 \Pr\{Z_x = y\} = Q(y|x).
\end{equation}
For $N$ uses of the channel $Q$, we can write
\begin{equation}\label{eq:alphaN}
  \cv Y = \alpha^{(N)}(\cv U, \cv Z),
\end{equation}
where $\cv Y[i] = \alpha(\cv U[i],\cv Z[i])$.

In this section, we consider general DMCs $Q_{\ell}$ for all $\ell$, which can be modeled as the function $\alpha_{\ell}$.
With the above formulation, we can write for $\ell=1,\ldots,L$, 
\begin{equation}\label{eq:chz}
  \cv Y_{\ell} = \alpha_{\ell}^{(N)}(\cv U_{\ell}, \cv Z_{\ell}).
\end{equation}

\subsection{Canonical Channels}
\label{sec:canonical}

For $0<\varepsilon\leq 1$, we call a channel $Q:\Qin \to \Qout$ an \emph{$\varepsilon$-canonical channel} if there exists $y_0 \in \Qout$ such that for every $x \in \Qin$, $Q(y_0|x)\geq \varepsilon$. The packet erasure channel, BSC and BEC are all canonical channels. Note that a canonical $Q$ has $C_0(Q)=0$.
We first consider the case that the channels $\{Q_{\ell}\}$ are all $\varepsilon$-canonical channels.
Define the event
\begin{equation*}
  E_0 = \left\{\lor_{\ell=1}^L(\cv Z_{\ell} = y_0)\right\},
\end{equation*}
where $\cv Z_{\ell} = y_0$ means $(\cv Z_{\ell}[i])_x = y_0$ for all $i$ and $x$.
The event $E_0$ means that there exists one link of the network such that all uses of the channel for transmitting a batch have the same output $y_0$.
Similar to the discussion in Section~\ref{sec:erasure}, the transition matrix $W_L$ can be expressed as
\begin{equation*}
  W_L=W_L^{(0)}p_0 + W_L^{(1)}p_1,
\end{equation*}
where $p_0=\text{Pr}\{E_0\}, p_1=\text{Pr}\{\overline{E_0}\}$, 
and
\begin{align*}
W_L^{(0)}(\cv y\mid\cv x)&=\text{Pr}\{\cv Y_L=\cv y\mid\cv X=\cv x, E_0\},\\
W_L^{(1)}(\cv y\mid\cv x)&=\text{Pr}\{\cv Y_L=\cv y\mid\cv X=\cv x, \overline{E_0}\}.
\end{align*}
Hence,
\begin{equation*}
  I(p_{\cv X}, W_L) \leq p_0I(p_{\cv X}, W_L^{(0)})  + p_1I(p_{\cv X}, W_L^{(1)}).
\end{equation*}

\begin{lemma}\label{lemma:2}
  When $Q_{\ell}$, $\ell = 1,\ldots, L$ are all $\varepsilon$-canonical channels,
  $p_1 \leq (1-\varepsilon^{|\Qin| N})^L$.
\end{lemma}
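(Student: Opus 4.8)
The plan is to bound $p_1=\Pr\{\overline{E_0}\}$ directly, exploiting two facts: the mutual independence of the channel noise across distinct links, and the $\varepsilon$-canonical property of each individual link. First I would rewrite the complementary event as $\overline{E_0}=\{\land_{\ell=1}^L(\cv Z_\ell\neq y_0)\}$. Because the noise families $\cv Z_1,\ldots,\cv Z_L$ on the $L$ distinct channels are mutually independent, this factorizes: $p_1=\prod_{\ell=1}^L\Pr\{\cv Z_\ell\neq y_0\}=\prod_{\ell=1}^L\bigl(1-\Pr\{\cv Z_\ell=y_0\}\bigr)$. The whole argument then reduces to a single per-link lower bound on $\Pr\{\cv Z_\ell=y_0\}$, after which monotonicity turns the product into the claimed power of $L$.

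Next I would establish that lower bound. Recall from the model \eqref{eq:ssd}--\eqref{eq:z} and its $N$-fold extension \eqref{eq:alphaN} that a single link's noise $\cv Z_\ell$ is the independent family of entries $(\cv Z_\ell[i])_x$ indexed by the $N$ channel uses $i=1,\ldots,N$ and the $|\Qin|$ possible inputs $x\in\Qin$, each distributed as $\Pr\{(\cv Z_\ell[i])_x=y\}=Q_\ell(y\mid x)$. By definition, the event $\cv Z_\ell=y_0$ requires every one of these $|\Qin|N$ coordinates to equal $y_0$ simultaneously, so by independence $\Pr\{\cv Z_\ell=y_0\}=\prod_{i=1}^N\prod_{x\in\Qin}Q_\ell(y_0\mid x)$. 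Since each $Q_\ell$ is $\varepsilon$-canonical at the distinguished output $y_0$, we have $Q_\ell(y_0\mid x)\geq\varepsilon$ for every $x\in\Qin$, and hence $\Pr\{\cv Z_\ell=y_0\}\geq\varepsilon^{|\Qin|N}$.

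Finally I would substitute this into the factorized expression: since $\Pr\{\cv Z_\ell=y_0\}\geq\varepsilon^{|\Qin|N}$ for each $\ell$, we get $p_1\leq\prod_{\ell=1}^L\bigl(1-\varepsilon^{|\Qin|N}\bigr)=(1-\varepsilon^{|\Qin|N})^L$, which is exactly the statement.

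This lemma is essentially a routine probability computation rather than a deep argument, so there is no serious obstacle; the only point demanding care is the bookkeeping of the exponent. The factor $|\Qin|N$ arises precisely because a single link's noise $\cv Z_\ell$ is an independent family indexed by \emph{both} the $N$ channel uses and the $|\Qin|$ inputs, all of which must produce $y_0$ for the event $\cv Z_\ell=y_0$ to occur; it would be easy to drop the $|\Qin|$ by mistakenly treating $\cv Z_\ell$ as only $N$ coordinates. I would also take care to note that the $\varepsilon$-canonical bound $Q_\ell(y_0\mid x)\geq\varepsilon$ holds at the \emph{common} output symbol $y_0$ used in the definition of $E_0$ and uniformly over all inputs $x$, which is exactly what the definition of an $\varepsilon$-canonical channel guarantees.
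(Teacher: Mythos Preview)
Your proposal is correct and follows essentially the same approach as the paper: factorize $p_1$ over the $L$ independent links, write each per-link term as $1-\prod_{i}\prod_{x}Q_\ell(y_0\mid x)$ using independence of the $|\Qin|N$ noise coordinates, and then apply the $\varepsilon$-canonical lower bound. You simply spell out in more detail what the paper compresses into two displayed lines.
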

\begin{proof} We write
  \begin{align*}
  p_1&=
  \prod_{\ell=1}^L\bigg[
  1 - \prod_{i\in\{1,\ldots,N\}} \prod_{x\in\Qin}\text{Pr}((\cv Z_{\ell}[i])_x=y_0)
  \bigg]\\
  &= \prod_{\ell=1}^L\bigg[
  1 - \prod_{i\in\{1,\ldots,N\}}\prod_{x\in\Qin}Q_{\ell}(y_0| x)
  \bigg] \leq  (1-\varepsilon^{|\Qin|N})^L,
  \end{align*}
  where the second equality follows from \eqref{eq:z}.
\end{proof}

\begin{lemma}\label{lemma:3}
  For a line network of length $L$ of $\varepsilon$-canonical channels, $I(p_{\cv X}, W_L^{(0)}) = 0$.
\end{lemma}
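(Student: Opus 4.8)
The plan is to mirror the proof of Lemma~\ref{lemma:e} (the packet erasure case), replacing the erasure event $\{\cv Z_\ell = 0\}$ by the canonical event $\{\cv Z_\ell = y_0\}$ and exploiting the defining property of $\varepsilon$-canonical channels. The goal is to show that conditioned on $E_0$ the input $\cv X$ and the end-to-end output $\cv Y_L$ are independent; concretely, I would establish that for every $\cv x \in \mc A^M$ and every $\cv y_L \in \Qout^N$,
\begin{equation*}
  P(\cv y_L, \cv x, E_0) = P(\cv x)\, P(\cv y_L, E_0),
\end{equation*}
which immediately gives $I(p_{\cv X}, W_L^{(0)}) = 0$.

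First I would record the observation that makes $y_0$ behave exactly like the erasure symbol: by the channel model \eqref{eq:ssd} and \eqref{eq:chz}, if $\cv Z_\ell = y_0$ (i.e.\ $(\cv Z_\ell[i])_x = y_0$ for all $i$ and all $x$), then for every $i$
\begin{equation*}
  \cv Y_\ell[i] = \sum_{x \in \Qin} \bm 1\{\cv U_\ell[i] = x\}\, (\cv Z_\ell[i])_x = y_0,
\end{equation*}
since exactly one indicator is nonzero. Thus on $\{\cv Z_\ell = y_0\}$ the output $\cv Y_\ell$ is the constant all-$y_0$ sequence, independent of the input $\cv U_\ell$, exactly as $\cv Y_\ell = 0$ on $\{\cv Z_\ell = 0\}$ in the erasure case.

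Next I would introduce the same disjoint decomposition of $E_0$ used in Lemma~\ref{lemma:e}: setting $\overline{E_0^{(\ell)}} = \{\cv Z_{\ell'} \neq y_0,\ \ell' > \ell\}$, the events $\{\cv Z_\ell = y_0\} \cap \overline{E_0^{(\ell)}}$ are disjoint and union to $E_0$ (each picks out the last link whose noise equals $y_0$). Summing over $\ell$ and inserting the intermediate pair $(\cv u_\ell, \cv y_\ell)$, the Markov chain $\cv X \to \cdots \to \cv Y_\ell \to \cv U_{\ell+1} \to \cdots \to \cv Y_L$ lets me factor the joint law as $P(\cv y_L, \overline{E_0^{(\ell)}} \mid \cv y_\ell)\, P(\cv x, \cv u_\ell, \cv y_\ell, \cv Z_\ell = y_0)$; note that $\overline{E_0^{(\ell)}}$ depends only on the noise of links strictly after $\ell$, so it attaches to the ``future'' factor. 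Using the independence of $\cv Z_\ell$ from $(\cv X, \cv U_\ell)$ together with the observation above, I would collapse $\sum_{\cv u_\ell} P(\cv x, \cv u_\ell, \cv y_\ell, \cv Z_\ell = y_0)$ to $P(\cv x)\, P(\cv y_\ell, \cv Z_\ell = y_0)$, precisely because $P(\cv y_\ell \mid \cv u_\ell, \cv Z_\ell = y_0)$ is the indicator of $\cv y_\ell$ being the all-$y_0$ sequence and hence does not depend on $\cv u_\ell$. The $\cv x$-free residual sum is then recognized as $P(\cv y_L, E_0)$, yielding the claimed factorization.

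The proof is largely a transcription of Lemma~\ref{lemma:e}, so I do not expect a genuine obstacle; the one place that warrants care is the conditional-independence (Markov) step, since here the noise $\cv Z_\ell$ is a vector of $|\Qin| N$ independent variables rather than a single erasure pattern. I would make sure that $\overline{E_0^{(\ell)}}$ is measurable with respect to the future noise only, so that conditioning on $\cv y_\ell$ cleanly severs past from future, and that the canonical property is invoked in the right direction---namely that the output $y_0$ is \emph{attainable} for every input (guaranteed by $Q_\ell(y_0 \mid x) \geq \varepsilon > 0$), which is what makes the decoupling event $\{\cv Z_\ell = y_0\}$ occur with positive probability and forces the link output to be input-independent there.
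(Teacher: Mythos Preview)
Your proposal is correct and follows essentially the same route as the paper's proof: both transcribe the argument of Lemma~\ref{lemma:e}, using the disjoint decomposition of $E_0$ via $\overline{E_0^{(\ell)}}$ and the key observation that on $\{\cv Z_\ell = y_0\}$ the channel output $\cv Y_\ell$ is the constant all-$y_0$ sequence (so $P(\cv y_\ell \mid \cv u_\ell, \cv Z_\ell = y_0) = P(\cv y_\ell \mid \cv Z_\ell = y_0)$), after which the factorization $P(\cv y_L, \cv x, E_0) = P(\cv x)\, P(\cv y_L, E_0)$ follows verbatim.
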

\begin{proof}
  Similar as the proof of Lemma~\ref{lemma:e}, we have
  \begin{IEEEeqnarray*}{rCl}
    P(\cv y_L,\cv x,E_0) & = & \sum_{\ell} \sum_{\cv y_{\ell}} P(\cv y_L, \overline{E_0^{(\ell)}}\Big| \cv y_{\ell}) \\
    & & \sum_{\cv u_{\ell}} P(\cv x, \cv u_\ell) P(\cv Z_{\ell}=y_0) P(\cv y_{\ell}|\cv u_{\ell},\cv Z_{\ell}=y_0).
  \end{IEEEeqnarray*}
  By \eqref{eq:chz},  given $\cv Z_{\ell}=y_0$,
  \begin{IEEEeqnarray*}{rCl}
    \cv Y_{\ell} & = & \alpha^{(N)}(\cv U_{\ell}, \cv Z_{\ell}=y_0) = y_0,
  \end{IEEEeqnarray*}
and hence $P(\cv y_{\ell}\mid\cv u_{\ell},\cv Z_{\ell}=y_0) = P(\cv y_{\ell}\mid\cv Z_{\ell}=y_0)$.
Following the same argument as in Lemma~\ref{lemma:e},
\[
P(\cv y_L, \cv x, E_0) = P(\cv x) P(\cv y_L, E_0),
\]
which implies $I(p_{\cv X}, W_L^{(0)}) = 0$.
\end{proof}

Combining both Lemma~\ref{lemma:2} and Lemma~\ref{lemma:3}, we can assert that 
\[
C_L\leq \frac{(1-\varepsilon^{|\Qin| N})^L}{N} \min\{ M \ln|\mc A|, N\ln |\Qout| \},
\]
which implies the following theorem:
\begin{theorem}
  For a length-$L$ line network of $\varepsilon$-canonical channels with finite input and output alphabets,
  \begin{enumerate}
\item When $N=O(1)$, $C_L = O((1-\varepsilon^{|\Qin| N})^L)$.
\item When $M=O(1)$ and $N=\Theta(\ln L)$, $C_L = O(1/\ln L)$. 
\item When $M=\Omega(\ln L)$ and $N=\Omega(\ln L)$, $C_L = O(1)$.
\end{enumerate}
\end{theorem}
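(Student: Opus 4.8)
The plan is to read off all three claims directly from the master inequality
\[
C_L \le \frac{(1-\varepsilon^{|\Qin| N})^L}{N}\,\min\{M\ln|\mc A|,\; N\ln|\Qout|\},
\]
which was just established from Lemma~\ref{lemma:2} and Lemma~\ref{lemma:3}. All the analytic content has already been spent on the vanishing of the $E_0$-conditioned mutual information (Lemma~\ref{lemma:3}) and on the bound $p_1\le(1-\varepsilon^{|\Qin| N})^L$ (Lemma~\ref{lemma:2}); what remains for the theorem is purely bookkeeping. I would treat the right-hand side as a product of two factors, the \emph{bottleneck factor} $(1-\varepsilon^{|\Qin| N})^L$ and the \emph{rate factor} $\frac1N\min\{M\ln|\mc A|,\,N\ln|\Qout|\}$, and simply track how each behaves in $L$ under the three parameter regimes. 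This parallels the erasure-channel theorem, whose proof merely substitutes into \eqref{eq:pec}.

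First I would record the two cheap estimates for the rate factor: taking the $N\ln|\Qout|$ branch of the minimum gives $\frac1N\min\{\cdots\}\le\ln|\Qout|$, a constant free of $L$ (and free of $M$); taking the $M\ln|\mc A|$ branch gives $\frac1N\min\{\cdots\}\le\frac{M}{N}\ln|\mc A|$. For Case~1 ($N=O(1)$) I would use the first estimate, so the rate factor is $O(1)$ for \emph{arbitrary} $M$, and then focus on the bottleneck factor: since $N$ is bounded by some constant $N_{\max}$ and $\varepsilon\in(0,1]$, we have $\varepsilon^{|\Qin| N}\ge\varepsilon^{|\Qin| N_{\max}}>0$, whence the base satisfies $1-\varepsilon^{|\Qin| N}\le 1-\varepsilon^{|\Qin| N_{\max}}<1$. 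The bottleneck factor therefore decays geometrically in $L$, giving $C_L=O((1-\varepsilon^{|\Qin| N})^L)$, equivalently $O(e^{-c'L})$ as tabulated in Table~\ref{tab:1}.

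For Case~2 ($M=O(1)$, $N=\Theta(\ln L)$) I would instead use the second estimate, bounding the rate factor by $\frac{M}{N}\ln|\mc A|=O(1/\ln L)$, and discard the bottleneck factor crudely via $(1-\varepsilon^{|\Qin| N})^L\le 1$; the product is then $O(1/\ln L)$. For Case~3 ($M,N=\Omega(\ln L)$) both crude bounds, rate factor $\le\ln|\Qout|$ and bottleneck factor $\le 1$, already yield $C_L=O(1)$.

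I do not expect a genuinely hard step, but the point deserving care is that the three cases are asymmetric in \emph{which} factor produces the decay, and one must pick the correct branch of the minimum accordingly. In Case~1 the decay is supplied entirely by the bottleneck factor and hinges on $N=O(1)$ keeping the base away from $1$. In Case~2 this can no longer be exploited: with $N=\Theta(\ln L)$ one only has $\varepsilon^{|\Qin| N}=L^{-\Theta(1)}$, so $(1-\varepsilon^{|\Qin| N})^L$ need not shrink at all, and all of the $1/\ln L$ decay must instead be extracted from the rate factor through $M=O(1)$. Choosing the right one of the two $\min$-branches in each regime is the only thing to watch.
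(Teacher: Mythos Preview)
Your proposal is correct and matches the paper's approach exactly: the paper simply states that the master inequality ``implies the following theorem,'' and you have carried out precisely that substitution, correctly selecting the $N\ln|\Qout|$ branch in Cases~1 and~3 and the $M\ln|\mc A|$ branch in Case~2.
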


\subsection{General Channels}
\label{sec:genc}

Consider a channel $Q:\Qin \to \Qout$ with $C_0(Q)=0$, modeled as in \eqref{eq:ssd}-\eqref{eq:alphaN}.
Denote by $\varepsilon_Q$ the maximum value such that for any $x, x'\in \Qin$, there exists $y\in \Qout$ such that $Q(y|x)\geq \varepsilon_{Q}$ and $Q(y|x')\geq \varepsilon_{Q}$. Note that $C_0(Q)=0$ if and only if $\varepsilon_Q>0$.

\begin{lemma}\label{zero-error}
  For a channel $Q:\Qin \to \Qout$ with $C_0(Q)=0$, and any non-empty $\mc A\subseteq\Qin^N$,
  there exist $\cv z=((\cv z[i])_x \in \Qout, x\in \Qin, i=1,\ldots,N)$ and a subset $\mc B\subseteq\Qout^N$ with $|\mc B|\le \lceil |\mc A|/2 \rceil$ such that $\alpha^{(N)}(\cv x,\cv z)\in \mc B$ for any $\cv x\in \mc A$ and $\Pr\{\cv Z = \cv z\} \geq \varepsilon_{Q}^{|\Qin|N}$.
\end{lemma}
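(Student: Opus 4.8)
The plan is to exploit the explicit function representation of the channel. Since $\alpha^{(N)}(\cv x,\cv z)$ acts coordinatewise, fixing a realization $\cv z$ is the same as fixing $N$ maps $g_i\colon\Qin\to\Qout$ via $g_i(x):=(\cv z[i])_x$, and the induced map $\cv x\mapsto(g_1(\cv x[1]),\dots,g_N(\cv x[N]))$ sends $\mc A$ to some image set, which I would take as $\mc B$. Because the $(\cv Z[i])_x$ are independent with $\Pr\{(\cv Z[i])_x=y\}=Q(y\mid x)$, we have $\Pr\{\cv Z=\cv z\}=\prod_{i=1}^N\prod_{x\in\Qin}Q(g_i(x)\mid x)$; hence the probability requirement $\Pr\{\cv Z=\cv z\}\ge\varepsilon_Q^{|\Qin|N}$ is met as soon as each value is chosen so that $Q(g_i(x)\mid x)\ge\varepsilon_Q$. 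The task thus splits into two independent requirements on the $g_i$: a \emph{feasibility} constraint (every value is a $\ge\varepsilon_Q$ output of its own symbol) and a \emph{compression} constraint $|\mc B|\le\lceil|\mc A|/2\rceil$.

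The tool for feasibility is the hypothesis $C_0(Q)=0$, equivalently $\varepsilon_Q>0$: for every pair $x,x'$ there is a common output $y$ with $Q(y\mid x),Q(y\mid x')\ge\varepsilon_Q$ (taking $x'=x$ also gives a $\ge\varepsilon_Q$ output for any single symbol). Thus at any coordinate I may, within budget, either leave a symbol alone or identify a chosen \emph{pair} of symbols by routing both to their common $\ge\varepsilon_Q$ output. I would then try to achieve compression by pairing the elements of $\mc A$ and collapsing each pair: to merge $\cv a$ with $\cv b$ it suffices to make $g_i$ identify $\cv a[i]$ with $\cv b[i]$ at every coordinate $i$ where they differ, after which the image has at most $\lceil|\mc A|/2\rceil$ points. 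The natural way to organize this is induction on $N$, the base case $N=1$ being immediate: pair up the symbols appearing in $\mc A\subseteq\Qin$ and send each pair to a common $\ge\varepsilon_Q$ output.

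The hard part is that a single map $g_i$ is shared by all pairs at coordinate $i$, so the identifications requested by different pairs must be mutually consistent: at each coordinate they must form \emph{disjoint} symbol-pairs. If two pairs request, at the same coordinate, identifications $\{s,t\}$ and $\{t,u\}$ sharing a symbol, realizing both would force identifying three inputs $s,t,u$ at once, and $\varepsilon_Q$ furnishes only \emph{pairwise} common outputs, so such a chained identification need not stay within the $\varepsilon_Q^{|\Qin|N}$ budget. The crux is therefore to choose the pairing—equivalently, the product partition of $\Qin^N$ into cells of side at most two—so that no such conflict arises while still halving $\mc A$; I would attempt this by letting the induction peel one coordinate at a time and, at the peeled coordinate, select a matching of the symbols compatible with the collapses already chosen on the remaining coordinates. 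Controlling this interaction between the coordinatewise matchings, and guaranteeing that every identification the construction performs is genuinely pairwise, is where I anticipate essentially all of the difficulty; the probability bookkeeping itself is routine once feasibility is arranged.
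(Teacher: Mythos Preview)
Your plan is precisely the paper's: pair off the elements of $\mc A$; for each pair $(\cv x,\cv x')$ choose, coordinate by coordinate, a common output $\cv y[i]$ with $Q(\cv y[i]\mid\cv x[i]),\,Q(\cv y[i]\mid\cv x'[i])\ge\varepsilon_Q$ and set $(\cv z[i])_{\cv x[i]}=(\cv z[i])_{\cv x'[i]}=\cv y[i]$; take $\mc B$ to be the set of chosen $\cv y$'s; and fill any still-unassigned $(\cv z[i])_x$ with an output of probability at least $\varepsilon_Q$. The probability bound $\Pr\{\cv Z=\cv z\}\ge\varepsilon_Q^{|\Qin|N}$ then drops out exactly as you describe.

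The issue you single out as the crux---that different pairs may demand conflicting values for the same slot $(\cv z[i])_x$, so that after overwriting $\alpha^{(N)}(\cv x,\cv z)$ need no longer lie in $\mc B$---is genuine, and the paper's short proof simply does not address it: it runs through the pairs, makes the assignments, and asserts the conclusion, with no discussion of overwriting or of how the pairing should be chosen to avoid it. So you have not missed a trick; you have been more scrupulous than the published argument. Your induction-on-$N$ strategy is one plausible route to a repair, but the paper offers no alternative mechanism either. For the downstream application only the iterated use of the lemma (collapsing $\Qin^N$ to a point in $K\ge N\log_2|\Qin|$ steps) matters, so a weaker per-step guarantee would still give the same scalability conclusions; but as written, the consistency problem you raise is a real gap shared by both your proposal and the paper's proof.
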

\begin{proof}
  The sequences in $\mc A$ can be put into $\lceil |\mc A|/2 \rceil$ pairs. For each pair $\cv x$ and $\cv x'$, there exists $\cv y$ such that for each $i=1,\ldots,N$, $Q(\cv y[i]|\cv x[i])\geq \varepsilon_{Q}$ and $Q(\cv y[i]|\cv x'[i]) \geq \varepsilon_{Q}$. Let $(\cv z[i])_{\cv x[i]}=\cv y[i]$ and $(\cv z[i])_{\cv x'[i]}=\cv y[i]$. After going through all the $\lceil |\mc A|/2 \rceil$ pairs, 
  let $\mc B$ be the collection of all $\cv y$, which satisfies $|\mc B|\le \lceil |\mc A|/2 \rceil$. For all $(\cv z[i])_x$ that have not been assigned, let $(\cv z[i])_x=y$ such that $Q(y|x)\geq \varepsilon_Q$. Hence $\Pr\{\cv Z = \cv z\} = \Pr\{(\cv Z[i])_x = (\cv z[i])_x, x\in \Qin, i=1,\ldots,N\} \geq \varepsilon_Q^{|\Qin|N}$.
\end{proof}

Assume that $L = L'K$, where $L'$ and $K$ are integers. 
As a result, the end-to-end transition matrix $W_L$ can be written as
\begin{equation*}
  W_L = F G_1 \Phi_{K} G_2 \Phi_{2K}\cdots G_{L'},
\end{equation*}
where for $i=1,\ldots,L'$, 
\begin{equation*}
  G_i = Q_{K(i-1)+1}^{\otimes N}\Phi_{K(i-1)+1} \cdots \Phi_{Ki-1}Q_{Ki}^{\otimes N}.
\end{equation*}
The length-$L$ network can be regarded as a length-$L'$ network of channels $G_i$, $i=1,\ldots, L'$.
Because of proposition~\ref{pro:deterministic}, without loss of optimality, we assume a deterministic recoding scheme, i.e., $F, \Phi_{\ell}$ are deterministic transition matrices.
The input $\cv X$ and output $\cv Y$ of $G_i$ can be written as a function
\begin{equation*}
  \cv Y = \alpha_{G_i}(\cv X, \cv Z_{\ell}, \ell= K(i-1)+1, \ldots, Ki),
\end{equation*}
where $\alpha_{G_i}$ can be determined recursively by $F, \{\Phi_{\ell}\}$ and \eqref{eq:alphaN}.

 When $K \geq  N \log_2|\Qin|$ and $\varepsilon_{Q_{\ell}} \geq \varepsilon$ for all $\ell$,
applying Lemma~\ref{zero-error} inductively, we know that there exists $\cv y_i$ and $\{\cv z_{\ell}\}$ such that 
$\alpha_{G_i}(\cv x, \cv z_{\ell}, \ell= K(i-1)+1, \ldots, Ki)=\cv y_i$ for all $\cv x\in \Qin^N$, and
\begin{equation*}
  \Pr\{\cv Z_{\ell} = \cv z_{\ell}, \ell= K(i-1)+1, \ldots, K\} \geq \varepsilon_Q^{|\Qin|NK}.
\end{equation*}
For $i = 1,\ldots, L'$, define events
\begin{equation*}
  E_i = \{\cv Z_{\ell} = \cv z_{\ell}, \ell= K(i-1)+1, \ldots, K\}.
\end{equation*}
Define the event 
\begin{equation*}
  E_0 = \left\{\lor_{i=1}^{L'}E_i\right\}.
\end{equation*}
Performing the similar analysis as in Section~\ref{sec:canonical} for the length-$L'$ network of channels $G_i$, $i=1,\ldots, L'$ with $E_0$ defined above, we obtain
\begin{IEEEeqnarray}{C}\label{eq:main}
C_L\leq \frac{(1-\varepsilon^{NK|\Qin|})^{L/K}}{N} \min\{ M \ln|\mc A|, N\ln |\Qout| \},\IEEEeqnarraynumspace
\end{IEEEeqnarray}
and hence the following result holds:

\begin{theorem}
  For a length-$L$ line network of channels $Q_{\ell}$ with finite input and output alphabets and  $\varepsilon_{Q_{\ell}}\geq \varepsilon>0$ for all $\ell$,  
\begin{enumerate}
\item When $N=O(1)$, $C_L = O((1-\varepsilon')^L)$ for certain $\varepsilon'\in (0,1)$.
\item When $M=O(1)$ and $N=\Theta(\ln L)$, $C_L = O(1/\ln L)$. 
\item When $M=\Omega(\ln L)$ and $N=\Omega(\ln L)$, $C_L = O(1)$.
\end{enumerate}
\end{theorem}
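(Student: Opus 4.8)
The plan is to obtain all three claims by substituting the parameter regimes directly into the master bound~\eqref{eq:main}, just as the packet-erasure theorem was read off from~\eqref{eq:pec}. Recall that~\eqref{eq:main} is valid for every integer $K\ge N\log_2|\Qin|$ (this is precisely the condition under which Lemma~\ref{zero-error} can be iterated across each super-channel $G_i$), and that the uniform constant $\varepsilon$ stands in for each $\varepsilon_{Q_\ell}$ because $\varepsilon_{Q_\ell}\ge\varepsilon$. Throughout I would take the smallest admissible merging length $K=\lceil N\log_2|\Qin|\rceil$; since $|\Qin|$ is a fixed finite alphabet, this gives $K=\Theta(N)$, so~\eqref{eq:main} becomes
\[
  C_L\;\le\;\frac{\bigl(1-\varepsilon^{NK|\Qin|}\bigr)^{L/K}}{N}\,\min\{M\ln|\mc A|,\,N\ln|\Qout|\},
\]
with exponent $L/K=\Theta(L/N)$.

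For Case~1, $N=O(1)$ forces $K=O(1)$, so $\delta:=\varepsilon^{NK|\Qin|}$ is a constant in $(0,1)$ and the residual factor $\tfrac1N\min\{\cdots\}\le\ln|\Qout|$ is also constant. Writing $(1-\delta)^{L/K}=\bigl((1-\delta)^{1/K}\bigr)^{L}$ exhibits the geometric decay with $\varepsilon'=1-(1-\delta)^{1/K}\in(0,1)$, giving $C_L=O\bigl((1-\varepsilon')^{L}\bigr)$. For Case~2, I would discard the exponential factor via $(1-\varepsilon^{NK|\Qin|})^{L/K}\le1$ and keep the $M\ln|\mc A|$ branch of the minimum, so that $C_L\le \tfrac{M\ln|\mc A|}{N}=O(1/\ln L)$ because $M=O(1)$ and $N=\Theta(\ln L)$. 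For Case~3 the exponential factor is again bounded by $1$ and the $N\ln|\Qout|$ branch gives $C_L\le\ln|\Qout|=O(1)$; this case in fact only uses the trivial per-use bound $I(\cv X;\cv Y_L)\le N\ln|\Qout|$ and needs none of the bottleneck machinery.

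The single point that needs care is the standing divisibility assumption $L=L'K$ used when the network was reorganized into the super-channels $G_i$. For arbitrary $L$ I would set $L'=\lfloor L/K\rfloor$ and invoke the Markov chain $\cv X\to\cv Y_{L'K}\to\cv Y_L$: data processing gives $I(\cv X;\cv Y_L)\le I(\cv X;\cv Y_{L'K})$, hence $C_L\le C_{L'K}$, and~\eqref{eq:main} may then be applied with the divisible length $L'K$. Since $L'=\Theta(L/K)$, none of the three orders change. I expect this bookkeeping to be the only remaining obstacle in the present statement; the genuine work — isolating the bottleneck event $E_0$ and controlling $p_1$ through the inductive use of Lemma~\ref{zero-error} — is already complete in~\eqref{eq:main}, and the case analysis above is routine arithmetic.
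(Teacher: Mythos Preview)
Your proposal is correct and follows essentially the same approach as the paper: both read the three cases off the master bound~\eqref{eq:main} by substituting the respective parameter regimes. Your treatment is in fact slightly more careful than the paper's, since you spell out the choice $K=\lceil N\log_2|\Qin|\rceil$, the extraction of $\varepsilon'$ in Case~1, and the data-processing patch for the divisibility assumption $L=L'K$, all of which the paper leaves implicit.
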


\section{Concluding Remarks}

This paper characterized the tight capacity upper bound of batched codes for line networks when the channels have finite alphabets and $0$ zero-error capacities. 

Generalization of our analysis for channels with infinite alphabets and continuous channels is of research interests. The study of batched code design for a line network of channels like BSC is also desirable. 

Last, we are curious whether our outer bound holds without the batched code constraint.


\bibliographystyle{IEEEtran}
\bibliography{shortbib.bib}

\end{document}